\theoremstyle{definition}
\newtheorem{lemma}{Lemma}
\theoremstyle{remark}
\newtheorem*{remark}{Remark}
\newcommand*{\mybox}[1]{%
  \framebox{\raisebox{0cm}[0.5\baselineskip][0.05\baselineskip]{%
    \hbox to 0.1cm {\hss#1\hss}}}\hspace{0.05cm}}
\begin{document}
\title{Physical Zero-Knowledge Proof for Numberlink Puzzle and $k$ Vertex-Disjoint Paths Problem\thanks{A preliminary version of this paper \cite{ruangwises} has appeared in the proceedings of FUN 2021.}}
\author[1]{Suthee Ruangwises\thanks{\texttt{ruangwises.s.aa@m.titech.ac.jp}}}
\author[1]{Toshiya Itoh\thanks{\texttt{titoh@c.titech.ac.jp}}}
\affil[1]{Department of Mathematical and Computing Science, Tokyo Institute of Technology, Tokyo, Japan}
\date{}
\maketitle

\begin{abstract}
Numberlink is a logic puzzle with an objective to connect all pairs of cells with the same number by non-crossing paths in a rectangular grid. In this paper, we propose a physical protocol of zero-knowledge proof for Numberlink using a deck of cards, which allows a prover to convince a verifier that he/she knows a solution without revealing it. In particular, the protocol shows how to physically count the number of elements in a list that are equal to a given secret value without revealing that value, the positions of elements in the list that are equal to it, or the value of any other element in the list. Finally, we show that our protocol can be modified to verify a solution of the well-known $k$ vertex-disjoint paths problem, both the undirected and directed settings.

\textbf{Keywords:} card-based cryptography, zero-knowledge proof, Numberlink, puzzle, vertex-disjoint paths, graph
\end{abstract}

\section{Introduction}
\textit{Numberlink} is a logic puzzle developed by a Japanese company Nikoli, which is famous for creating many popular logic puzzles including Sudoku, Kakuro, Shikaku, and Hashiwokakero. Recently, the puzzle has become increasingly popular, and a large number of Numberlink mobile apps with different names have been developed \cite{google}.

A Numberlink puzzle consists of a rectangular grid with some cells containing a number. Each number appears exactly twice in the grid. The objective of this puzzle is to connect every pair of cells with the same number by a path that can go from a cell to its horizontally or vertically adjacent cell. Paths cannot cross or share a cell with one another. In the official rule \cite{nikoli}, it is not required that all cells in the grid have to be covered by paths. However, a puzzle is generally considered to be well-designed if it has a unique solution, and all cells are covered by paths in that solution.

\begin{figure}
\begin{center}
\begin{tikzpicture}
\draw[step=0.8cm,color=gray] (0,0) grid (4,4);

\node at (3.6,3.6) {4};
\node at (0.4,2.8) {3};
\node at (1.2,2.8) {1};
\node at (3.6,2.8) {3};
\node at (0.4,2.0) {2};
\node at (2.8,2.0) {4};
\node at (0.4,0.4) {2};
\node at (3.6,0.4) {1};
\end{tikzpicture}
\hspace{1.5cm}
\begin{tikzpicture}
\draw[step=0.8cm,color=gray] (0,0) grid (4,4);

\node at (3.6,3.6) {4};
\node at (0.4,2.8) {3};
\node at (1.2,2.8) {1};
\node at (3.6,2.8) {3};
\node at (0.4,2.0) {2};
\node at (2.8,2.0) {4};
\node at (0.4,0.4) {2};
\node at (3.6,0.4) {1};

\draw[very thick] (1.2,2.6) -- (1.2,0.4);
\draw[very thick] (1.2,0.4) -- (3.4,0.4);
\draw[very thick] (0.4,1.8) -- (0.4,0.6);
\draw[very thick] (0.4,3.0) -- (0.4,3.6);
\draw[very thick] (0.4,3.6) -- (2.0,3.6);
\draw[very thick] (2.0,3.6) -- (2.0,1.2);
\draw[very thick] (2.0,1.2) -- (3.6,1.2);
\draw[very thick] (3.6,1.2) -- (3.6,2.6);
\draw[very thick] (3.4,3.6) -- (2.8,3.6);
\draw[very thick] (2.8,3.6) -- (2.8,2.2);
\end{tikzpicture}
\caption{An example of a Numberlink puzzle (left) and its solution (right)}
\label{fig1}
\end{center}
\end{figure}

Suppose that Alice, an expert in Numberlink, created a difficult Numberlink puzzle and challenged her friend Bob to solve it. After several tries, Bob could not solve her puzzle. He then claimed that the puzzle does not have a solution and refused to try it anymore. In order to convince Bob that her puzzle actually has a solution without revealing it to him (which would render the challenge pointless), Alice needs some kind of a \textit{zero-knowledge proof}.

\subsection{Zero-Knowledge Proof}
A zero-knowledge proof is an interactive proof between a prover $P$ and a verifier $V$. Both $P$ and $V$ are given an instance $x$ of a computational problem. Only $P$ knows a solution $w$ of $x$, and $V$ cannot obtain $w$ from $x$ by his/her computational power. $P$ wants to convince $V$ that he/she knows a solution of $x$ without revealing any information about $w$ to $V$. A zero-knowledge proof must satisfy the following three properties.
\begin{enumerate}
	\item \textbf{Completeness:} If $P$ knows $w$, then $P$ can convince $V$ with high probability (in this paper, we focus only on the \textit{perfect completeness} property where the probability to convince $V$ is one).
	\item \textbf{Soundness:} If $P$ does not know $w$, then $P$ cannot convince $V$, except with a small probability called \textit{soundness error} (in this paper, we focus only on the \textit{perfect soundness} property where the soundness error is zero).
	\item \textbf{Zero-Knowledge:} $V$ cannot obtain any information about $w$, i.e. there exists a probabilistic polynomial time algorithm $S$ (called the \textit{simulator}) that does not know $w$, and the outputs of $S$ follow the same probability distribution as the outputs of the real protocol.
\end{enumerate}

The concept of zero-knowledge proof was first introduced by Goldwasser et al. \cite{zkp0}. Later, Goldreich et al. \cite{zkp} proved that there exists a zero-knowledge proof for every NP problem. Since Numberlink is known to be NP-complete \cite{np,np2,np1}, one can construct a cryptographic zero-knowledge proof for it. However, such construction requires cryptographic primitives and is neither practical nor intuitive.

Instead, we are interested in constructing a physical protocol using a deck of playing cards. These card-based protocols have benefit that they require only a portable deck of cards which can be found in everyday life, and do not require computers. Moreover, these protocols are easy to understand and verify the security and correctness, even for non-experts, hence they also have a great didactic value.

\subsection{Related Work}
In 2007, Gradwohl et al. \cite{sudoku0} developed the first card-based protocols of zero-knowledge proof for a logic puzzle Sudoku. Each of their several variants of the protocol either uses special scratch-off cards or has a non-zero soundness error. Sasaki et al. \cite{sudoku} later improved the protocol for Sudoku to achieve perfect soundness without using special cards. Other than Sudoku, card-based protocols of zero-knowledge proof for other logic puzzles have been developed as well, including Nonogram \cite{nonogram}, Akari \cite{akari}, Takuzu \cite{akari}, Kakuro \cite{akari,kakuro}, KenKen \cite{akari}, Makaro \cite{makaro}, Norinori \cite{norinori}, and Slitherlink \cite{slitherlink}.

These protocols of zero-knowledge proof employ methods to physically verify specific functions. For example, the protocol for Sudoku \cite{sudoku0} shows how to verify the presence of all numbers in a list without revealing their order, the protocol for Makaro \cite{makaro} shows how to verify that a number is the largest one in a list without revealing any value in the list, and the protocol for Norinori \cite{norinori} shows how to verify the presence of a given number in a list without revealing its position or any other value in the list.

\subsection{Our Contribution}
In this paper, we propose a physical protocol of zero-knowledge proof for Numberlink using a deck of cards. The protocol achieves perfect completeness and perfect soundness properties.

More importantly, by developing the protocol for Numberlink, we also extend the set of functions that are known to be physically verifiable. In particular, our protocol shows how to physically count the number of elements in a list that are equal to a given secret value without revealing that value, the positions of elements in the list that are equal to it, or the value of any other element in the list.

Finally, we show that our protocol can be modified to verify a solution of the undirected $k$ vertex-disjoint paths problem (\textsc{u}$k$-\textsc{dpp}) and the directed $k$ vertex-disjoint paths problem (\textsc{d}$k$-\textsc{dpp}), which are two well-known problems in algorithmic graph theory.

The main differences from the conference version of this paper \cite{ruangwises} are the improvement of the verification phase of our main protocol, which slightly reduces the number of required cards, and the addition of a new protocol for \textsc{d}$k$-\textsc{dpp}, in which the verification phase is divided into two rounds.

\section{Preliminaries}
\subsection{Numberlink Grid}
We consider a Numberlink grid with size $m \times n$ and has $k$ pairs of numbers $1,2,...,k$ written on some of its cells. We call two cells in the grid \textit{adjacent} if they are horizontally or vertically adjacent. Cells with a number written on them are called \textit{terminal cells}, and other cells are called \textit{non-terminal cells}.

A \textit{path} in a solution of a Numberlink puzzle is a sequence of cells $(c_1,c_2,...,c_t)$ where $c_1$ and $c_t$ are terminal cells with the same numbers written on them and all the other cells are non-terminal cells, with $c_i$ being adjacent to $c_{i+1}$ for every $i=1,2,...,t-1$. Also, a path $(c_1,c_2,...,c_t)$ is called \textit{simple} if there is no $i,j$ such that $j>i+1$ and $c_i$ is adjacent to $c_j$.

A Numberlink puzzle is called \textit{well-designed} if it has a unique solution, and all cells are covered by paths in that solution. Note that if a puzzle is well-designed, then every path in its solution must be simple (otherwise if we have a non-simple path $(c_1,c_2,...,c_t)$ with $c_i$ being adjacent to $c_j$ where $j>i+1$, then we can replace it with a shorter path $(c_1,c_2,...,c_i,c_j,c_{j+1}...,c_t)$, creating an alternative solution).

\subsection{Cards}
We use two types of cards in our protocol: \textit{encoding cards} and \textit{marking cards}. An encoding card has either $\clubsuit$ or $\heartsuit$ on the front side, while a marking card has a positive integer on the front side. All cards have an identical back side.

Define $E_y(x)$ to be a sequence of $y$ encoding cards in a row, with all of them being \mybox{$\clubsuit$} except the $x$-th card from the left being \mybox{$\heartsuit$}, e.g. $E_3(1)$ is \mbox{\mybox{$\heartsuit$}\mybox{$\clubsuit$}\mybox{$\clubsuit$}} and $E_4(3)$ is \mbox{\mybox{$\clubsuit$}\mybox{$\clubsuit$}\mybox{$\heartsuit$}\mybox{$\clubsuit$}}. We use $E_y(x)$ to encode a number $x$ in a situation where the maximum possible number is at most $y$. This encoding rule was first considered by Shinagawa et al. \cite{shinagawa} in the context of using regular $y$-gon cards to encode integers in $\mathbb{Z}/y\mathbb{Z}$.

\subsection{Matrix}
Suppose we have $a$ numbers $x_1,x_2,...,x_a$ that are at most $b$. Each number $x_i$ is encoded by a sequence of cards $E_b(x_i)$. We construct a matrix $D(a,b)$ of cards as follows.

First, construct an $a \times b$ matrix of face-down encoding cards, with the $i$-th topmost row being $E_b(x_i)$. Then, on top of the topmost row of the matrix, place face-down marking cards \mybox{1}, \mybox{2}, ..., \mybox{$b$} from left to right in this order. We call this new row Row 0. Also, to the left of the leftmost column of the matrix, place face-down marking cards \mybox{2}, \mybox{3}, ..., \mybox{$a$} from top to bottom in this order (starting at Row 2). We call this new column Column 0. As a result, $D(a,b)$ becomes an incomplete $(a+1) \times (b+1)$ matrix with two cards at the top-left corner removed (see Fig. \ref{fig2}).

\begin{figure}
\begin{center}
\begin{tikzpicture}
\node at (0,0) {\mybox{?}};
\node at (0.5,0) {\mybox{?}};
\node at (1,0) {\mybox{?}};
\node at (1.5,0) {\mybox{?}};
\node at (2,0) {\mybox{?}};
\node at (2.5,0) {\mybox{?}};
\draw[->] (2.8,0) -- (3.2,0);
\node at (3.8,0) {$E_6(x_5)$};

\node at (0,0.6) {\mybox{?}};
\node at (0.5,0.6) {\mybox{?}};
\node at (1,0.6) {\mybox{?}};
\node at (1.5,0.6) {\mybox{?}};
\node at (2,0.6) {\mybox{?}};
\node at (2.5,0.6) {\mybox{?}};
\draw[->] (2.8,0.6) -- (3.2,0.6);
\node at (3.8,0.6) {$E_6(x_4)$};

\node at (0,1.2) {\mybox{?}};
\node at (0.5,1.2) {\mybox{?}};
\node at (1,1.2) {\mybox{?}};
\node at (1.5,1.2) {\mybox{?}};
\node at (2,1.2) {\mybox{?}};
\node at (2.5,1.2) {\mybox{?}};
\draw[->] (2.8,1.2) -- (3.2,1.2);
\node at (3.8,1.2) {$E_6(x_3)$};

\node at (0,1.8) {\mybox{?}};
\node at (0.5,1.8) {\mybox{?}};
\node at (1,1.8) {\mybox{?}};
\node at (1.5,1.8) {\mybox{?}};
\node at (2,1.8) {\mybox{?}};
\node at (2.5,1.8) {\mybox{?}};
\draw[->] (2.8,1.8) -- (3.2,1.8);
\node at (3.8,1.8) {$E_6(x_2)$};

\node at (0,2.4) {\mybox{?}};
\node at (0.5,2.4) {\mybox{?}};
\node at (1,2.4) {\mybox{?}};
\node at (1.5,2.4) {\mybox{?}};
\node at (2,2.4) {\mybox{?}};
\node at (2.5,2.4) {\mybox{?}};
\draw[->] (2.8,2.4) -- (3.2,2.4);
\node at (3.8,2.4) {$E_6(x_1)$};

\node at (0,3.2) {\mybox{1}};
\node at (0.5,3.2) {\mybox{2}};
\node at (1,3.2) {\mybox{3}};
\node at (1.5,3.2) {\mybox{4}};
\node at (2,3.2) {\mybox{5}};
\node at (2.5,3.2) {\mybox{6}};
\node at (4.6,3.2) {(actually face-down)};

\node at (-0.7,0) {\mybox{5}};
\node at (-0.7,0.6) {\mybox{4}};
\node at (-0.7,1.2) {\mybox{3}};
\node at (-0.7,1.8) {\mybox{2}};
\draw[] (-0.7,-0.3) -- (-0.7,-0.6);
\draw[->] (-0.7,-0.6) -- (-0.2,-0.6);
\node at (1.6,-0.6) {(actually face-down)};

\draw[] (-1,-0.3) -- (-1,4.7);
\draw[] (-2.5,3.6) -- (3.2,3.6);

\node at (-1.3,0) {5};
\node at (-1.3,0.6) {4};
\node at (-1.3,1.2) {3};
\node at (-1.3,1.8) {2};
\node at (-1.3,2.4) {1};
\node at (-1.3,3.2) {0};
\node at (-2,1.5) {Row};

\node at (-0.7,3.9) {0};
\node at (0,3.9) {1};
\node at (0.5,3.9) {2};
\node at (1,3.9) {3};
\node at (1.5,3.9) {4};
\node at (2,3.9) {5};
\node at (2.5,3.9) {6};
\node at (1,4.4) {Column};
\end{tikzpicture}
\caption{An example of a matrix $D(5,6)$}
\label{fig2}
\end{center}
\end{figure}

\subsection{Double-Scramble Shuffle}
A \textit{double-scramble shuffle} is an extension of a \textit{pile-scramble shuffle} developed by Ishikawa et al. \cite{scramble}. In the pile-scramble shuffle, we rearrange only the columns of the matrix by a random permutation; in the double-scramble shuffle, we rearrange both the selected rows and selected columns of the matrix by random permutations.

The formal steps of the double-scramble shuffle on a matrix $D(a,b)$ are as follows.
\begin{enumerate}
	\item Rearrange Rows $2,3,...,a$ by a uniformly random permutation $p = (p_2,p_3,...,p_a)$ of $(2,3,...,a)$, i.e. move Row $i$ to Row $p_i$ for every $i=2,3,...,a$ ($p$ is hidden from all parties).
	\item Rearrange Columns $1,2,...,b$ by a uniformly random permutation $q = (q_1,q_2,...,q_b)$ of $(1,2,...,b)$, i.e. move Column $j$ to Column $q_j$ for every $j=1,2,...,b$ ($q$ is hidden from all parties).
\end{enumerate}

Observe that the double-scramble shuffle hides the order of $x_2,x_3,...,x_a$ and the actual value of each $x_i$, but still preserves the number of rows that encode the same value as Row 1.

One of the possible ways to perform the double-scramble shuffle in real world is by the following procedures. In Step 1, the prover $P$ publicly puts the cards in each row into an envelope. Then, $P$ and the verifier $V$ jointly scramble the envelopes into a random permutation $p$ (which is unknown to both). Finally, $P$ publicly opens each envelope and put the cards in it back into a corresponding row. By doing this, $V$ can ensure that $P$ has only made row-wise swaps (and not arbitrary exchanges of cards). The same goes for column-wise swaps in Step 2.

\begin{remark}
This real-world implementation is based on an assumption that jointly scrambling the envelopes acts like an oracle that shuffles them into a uniformly random permutation unknown to all parties. This is also a sufficient assumption for the whole protocol to work correctly since all the randomness in our main protocol comes from double-scramble shuffles.
\end{remark}

\subsection{Rearrangement Protocol}
The sole purpose of a rearrangement protocol is to revert the cards (after we rearrange them) back to their original positions so that we can reuse the cards without revealing them. This protocol was implicitly used in some previous work on card-based protocols \cite{makaro,revert1,revert2,sudoku}.

The formal steps of the rearrangement protocol on a matrix $D(a,b)$ are as follows.
\begin{enumerate}
	\item Apply the double-scramble shuffle to the matrix.
	\item Publicly turn over all marking cards in Column 0. Suppose the opened cards are $p_2,p_3,...,p_a$ from top to bottom in this order.
	\item Publicly rearrange Rows $2,3,...,a$ by a permutation $p = (p_2,p_3,...,p_a)$, i.e. move Row $i$ to Row $p_i$ for every $i=2,3,...,a$.
	\item Publicly turn over all marking cards in Row 0. Suppose the opened cards are $q_1,q_2,...,q_b$ from left to right in this order.
	\item Publicly rearrange Columns $1,2,...,b$ by a permutation $q = (q_1,q_2,...,q_b)$, i.e. move Column $j$ to Column $q_j$ for every $j=1,2,...,b$.
	\item Publicly turn over all face-up cards.
\end{enumerate}

Note that since we first apply the double-scramble shuffle at Step 1, the order of Rows $2,3,...,a$ and the order of Columns $1,2,...,b$ are uniformly distributed among all possible permutations. Therefore, revealing marking cards in Steps 2 and 4 does not leak any information about the cards.

\section{Our Main Protocol}
\subsection{Well-Designed Puzzles} \label{well}
For simplicity, we first consider a special case of well-designed puzzles where it is more straightforward to construct a protocol.

Recall that a Numberlink grid has size $m \times n$, and has $k$ pairs of numbers $1,2,...,k$ written on some of its cells. In the solution, for each path joining two terminal cells with a number $x$, we put a number $x$ on every cell on that path (see Fig. \ref{fig3}). Since the puzzle is well-designed, every cell has a number on it and every path is simple.

\begin{figure}[H]
\begin{center}
\begin{tikzpicture}
\fill[lightgray] (3.2,3.2) rectangle (4.0,4.0);
\fill[lightgray] (0.0,2.4) rectangle (0.8,3.2);
\fill[lightgray] (0.8,2.4) rectangle (1.6,3.2);
\fill[lightgray] (3.2,2.4) rectangle (4.0,3.2);
\fill[lightgray] (0.0,1.6) rectangle (0.8,2.4);
\fill[lightgray] (2.4,1.6) rectangle (3.2,2.4);
\fill[lightgray] (0.0,0.0) rectangle (0.8,0.8);
\fill[lightgray] (3.2,0.0) rectangle (4.0,0.8);

\draw[step=0.8cm,color=gray] (0,0) grid (4,4);

\node at (0.4,3.6) {3};
\node at (1.2,3.6) {3};
\node at (2.0,3.6) {3};
\node at (2.8,3.6) {4};
\node at (3.6,3.6) {4};
\node at (0.4,2.8) {3};
\node at (1.2,2.8) {1};
\node at (2.0,2.8) {3};
\node at (2.8,2.8) {4};
\node at (3.6,2.8) {3};
\node at (0.4,2.0) {2};
\node at (1.2,2.0) {1};
\node at (2.0,2.0) {3};
\node at (2.8,2.0) {4};
\node at (3.6,2.0) {3};
\node at (0.4,1.2) {2};
\node at (1.2,1.2) {1};
\node at (2.0,1.2) {3};
\node at (2.8,1.2) {3};
\node at (3.6,1.2) {3};
\node at (0.4,0.4) {2};
\node at (1.2,0.4) {1};
\node at (2.0,0.4) {1};
\node at (2.8,0.4) {1};
\node at (3.6,0.4) {1};
\end{tikzpicture}
\caption{The way we fill numbers on cells according to the solution of the puzzle in Fig. \ref{fig1}, with gray cells being terminal cells}
\label{fig3}
\end{center}
\end{figure}

The intuition of this protocol is that the prover $P$ will try to convince the verifier $V$ that
\begin{enumerate}
	\item every terminal cell has exactly one adjacent cell with the same number, and
	\item every non-terminal cell has exactly two adjacent cells with the same number.
\end{enumerate}

For each terminal cell with a number $x$, $P$ publicly puts a sequence of face-down cards $E_k(x)$ on it. Then, for each non-terminal cell with a number $x$, $P$ secretly puts a sequence of face-down cards $E_k(x)$ on it.

The formal steps of the verification phase for each terminal cell $c$ are as follows.
\begin{enumerate}
	\item Publicly construct a matrix of cards in the following way: put the sequence on $c$ into Row 1, then put the sequence on each adjacent cell to $c$ in any order into each of the next four (or three, or two, if $c$ is on the edge or at the corner) rows. Finally, put the marking cards to complete the matrix $D(5,k)$ (or $D(4,k)$, or $D(3,k)$, for the edge or corner case).
	\item Apply the double-scramble shuffle to the matrix.
	\item Publicly turn over all encoding cards in Row 1. Locate the position of a \mbox{\mybox{$\heartsuit$}.} Suppose it is at Column $j$.
	\item Publicly turn over all other encoding cards in Column $j$. If there is exactly one \mybox{$\heartsuit$} besides the one in Row 1, then the protocol continues; otherwise $V$ rejects and the protocol terminates.
	\item Publicly turn over all face-up cards and apply the rearrangement protocol to the matrix to revert the cards to their original positions. Finally, publicly put the cards back to their corresponding cells.
\end{enumerate}

The verification phase for each non-terminal cell works exactly the same as that for a terminal cell, except that in Step 4, $V$ verifies that there are exactly two (instead of one) \mybox{$\heartsuit$}s in Column $j$ besides the one in Row 1.

$P$ performs the verification phase for every cell in the grid\footnote{Step 5 is not necessary when verifying the last cell in the grid.}. If every cell passes the verification, then $V$ accepts.

In total, our protocol for a well-designed puzzle uses $kmn$ encoding cards and $k+4$ marking cards. Therefore, the total number of required cards is $\Theta(kmn)$.

\begin{remark}
In this protocol, $P$ can convince $V$ that he/she knows a solution, but cannot convince $V$ that the puzzle is well-designed or that all cells are covered by paths in his/her solution (see Fig. \ref{fig4}).
\end{remark}

\begin{figure}[H]
\begin{center}
\begin{tikzpicture}
\draw[step=0.8cm,color=gray] (0,0) grid (4,4);

\node at (0.4,3.6) {1};
\node at (0.4,0.4) {1};
\node at (1.2,3.6) {2};
\node at (1.2,0.4) {2};
\node at (2.8,2.8) {3};
\node at (2.8,1.2) {3};

\draw[very thick] (0.4,3.4) -- (0.4,0.6);
\draw[very thick] (1.2,3.4) -- (1.2,0.6);
\draw[very thick] (2.8,2.6) -- (2.8,1.4);
\end{tikzpicture}
\hspace{1.5cm}
\begin{tikzpicture}
\fill[lightgray] (0.0,0.0) rectangle (0.8,0.8);
\fill[lightgray] (0.8,0.0) rectangle (1.6,0.8);
\fill[lightgray] (0.0,3.2) rectangle (0.8,4.0);
\fill[lightgray] (0.8,3.2) rectangle (1.6,4.0);
\fill[lightgray] (2.4,0.8) rectangle (3.2,1.6);
\fill[lightgray] (2.4,2.4) rectangle (3.2,3.2);

\draw[step=0.8cm,color=gray] (0,0) grid (4,4);

\node at (0.4,3.6) {1};
\node at (1.2,3.6) {2};
\node at (2.0,3.6) {1};
\node at (2.8,3.6) {1};
\node at (3.6,3.6) {1};
\node at (0.4,2.8) {1};
\node at (1.2,2.8) {2};
\node at (2.0,2.8) {1};
\node at (2.8,2.8) {3};
\node at (3.6,2.8) {1};
\node at (0.4,2.0) {1};
\node at (1.2,2.0) {2};
\node at (2.0,2.0) {1};
\node at (2.8,2.0) {3};
\node at (3.6,2.0) {1};
\node at (0.4,1.2) {1};
\node at (1.2,1.2) {2};
\node at (2.0,1.2) {1};
\node at (2.8,1.2) {3};
\node at (3.6,1.2) {1};
\node at (0.4,0.4) {1};
\node at (1.2,0.4) {2};
\node at (2.0,0.4) {1};
\node at (2.8,0.4) {1};
\node at (3.6,0.4) {1};
\end{tikzpicture}
\end{center}
\caption{In this example, a puzzle is not well-designed and the prover $P$ knows a solution that does not cover all cells (left). However, it is still possible for $P$ to fill numbers on cells in a way that will get accepted in the protocol (right).}
\label{fig4}
\end{figure}

\subsection{General Puzzles} \label{general}
Now we consider a general case where the puzzle may not be well-designed, and the paths in the prover's solution may not cover all cells. We will employ some additional tricks to the protocol in Section \ref{well} to make it support general puzzles as well.

First, if the solution contains a non-simple path $(c_1,c_2,...,c_t)$ with $c_i$ being adjacent to $c_j$ where $j>i+1$, then we replace it with a shorter path $(c_1,c_2,...,c_i,c_j,c_{j+1}...,c_t)$. We repeatedly perform this until every path in the solution becomes simple.

We put a number on each cell that is covered by a path the same way as the protocol in Section \ref{well}. For each cell $c$ in the $i$-th row and $j$-th column, we call $c$ an \textit{even cell} if $i+j$ is even, and an \textit{odd cell} if $i+j$ is odd. Then, we put a number $k+1$ on each even cell not covered by any path, and a number $k+2$ on each odd cell not covered by any path (see Fig. \ref{fig5}). Observe that by filling the numbers this way, each cell not covered by any path will have no adjacent cell with the same number.

\begin{figure}[H]
\begin{center}
\begin{tikzpicture}
\draw[step=0.8cm,color=gray] (0,0) grid (4,4);

\node at (3.6,3.6) {2};
\node at (2.0,2.8) {1};
\node at (3.6,2.0) {2};
\node at (2.0,0.4) {1};

\draw[very thick] (3.6,3.4) -- (3.6,2.2);
\draw[very thick] (2.0,2.6) -- (2.0,0.6);
\end{tikzpicture}
\hspace{1.5cm}
\begin{tikzpicture}
\fill[lightgray] (3.2,3.2) rectangle (4.0,4.0);
\fill[lightgray] (1.6,2.4) rectangle (2.4,3.2);
\fill[lightgray] (3.2,1.6) rectangle (4.0,2.4);
\fill[lightgray] (1.6,0.0) rectangle (2.4,0.8);

\draw[step=0.8cm,color=gray] (0,0) grid (4,4);

\node at (0.4,3.6) {3};
\node at (1.2,3.6) {4};
\node at (2.0,3.6) {3};
\node at (2.8,3.6) {4};
\node at (3.6,3.6) {2};
\node at (0.4,2.8) {4};
\node at (1.2,2.8) {3};
\node at (2.0,2.8) {1};
\node at (2.8,2.8) {3};
\node at (3.6,2.8) {2};
\node at (0.4,2.0) {3};
\node at (1.2,2.0) {4};
\node at (2.0,2.0) {1};
\node at (2.8,2.0) {4};
\node at (3.6,2.0) {2};
\node at (0.4,1.2) {4};
\node at (1.2,1.2) {3};
\node at (2.0,1.2) {1};
\node at (2.8,1.2) {3};
\node at (3.6,1.2) {4};
\node at (0.4,0.4) {3};
\node at (1.2,0.4) {4};
\node at (2.0,0.4) {1};
\node at (2.8,0.4) {4};
\node at (3.6,0.4) {3};
\end{tikzpicture}
\end{center}
\caption{An example of a solution of a puzzle that is not well-designed (left), and the way we put numbers on the grid (right)}
\label{fig5}
\end{figure}

The intuition of this protocol is that the prover $P$ will try to convince the verifier $V$ that
\begin{enumerate}
	\item every terminal cell has exactly one adjacent cell with the same number, and
	\item every non-terminal cell either has a number greater than $k$, or has exactly two adjacent cells with the same number.
\end{enumerate}

Since the maximum number on the grid is at most $k+2$, we always use $E_{k+2}(x)$ instead of $E_k(x)$ to encode a number $x$ in this protocol. For each terminal cell with a number $x$, $P$ publicly puts a sequence of face-down cards $E_{k+2}(x)$ on it. Then, for each non-terminal cell with a number $x$, $P$ secretly puts a sequence of face-down cards $E_{k+2}(x)$ on it.

For each terminal cell, the verification phase works exactly the same as the protocol in Section \ref{well} (except the size of the matrix will be $D(a,k+2)$ instead of $D(a,k)$ for $a \in \{3,4,5\}$). For each non-terminal even (resp. odd) cell, we put two additional rows, both encoding the number $k+1$ (resp. $k+2$), to the bottom of the matrix. The formal steps for verifying each non-terminal cell $c$ are as follows.
\begin{enumerate}
	\item Publicly construct a matrix of cards in the following way: put the sequence on $c$ into Row 1, then put the sequence on each adjacent cell to $c$ into each of the next four (or three, or two, if $c$ is on the edge or at the corner) rows in any order. Then, if $c$ is an even cell (resp. odd cell), put two copies of a sequence $E_{k+2}(k+1)$ (resp. $E_{k+2}(k+2)$) into the next two rows. Finally, put the marking cards to complete the matrix $D(7,k+2)$ (or $D(6,k+2)$, or $D(5,k+2)$, for the edge or corner case).
	\item Apply the double-scramble shuffle to the matrix.
	\item Publicly turn over all encoding cards in Row 1. Locate the position of a \mbox{\mybox{$\heartsuit$}.} Suppose it is at Column $j$.
	\item Publicly turn over all other encoding cards in Column $j$. If there are exactly two \mbox{\mybox{$\heartsuit$}s} besides the one in Row 1, then the protocol continues; otherwise $V$ rejects and the protocol terminates.
	\item Publicly turn over all face-up cards and apply the rearrangement protocol to the matrix to revert the cards to their original positions. Finally, publicly put the cards back to their corresponding cells.
\end{enumerate}

In total, our protocol for a general puzzle uses $(k+2)(mn+2)$ encoding cards and $k+8$ marking cards. Therefore, the total number of required cards is still $\Theta(kmn)$.

\section{Proof of Correctness and Security}
We will prove the perfect completeness, perfect soundness, and zero-knowledge properties of the protocol for a general puzzle in Section \ref{general}.

\begin{lemma}[Perfect Completeness] \label{lem1}
If $P$ knows a solution of the Numberlink puzzle, then $V$ always accepts.
\end{lemma}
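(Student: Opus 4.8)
The plan is to show that an honest prover, armed with a valid solution, passes the verification phase for every cell in the grid, so that $V$ necessarily accepts. Since the protocol processes each cell independently, it suffices to verify that each terminal cell and each non-terminal cell individually passes its verification. I would argue this cell-by-cell, distinguishing the terminal case from the non-terminal case, and within the latter distinguishing cells that lie on a path from cells covered by the ``filler'' numbers $k+1$ and $k+2$.

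First I would treat a terminal cell $c$ carrying number $x$. Since $P$ holds a genuine solution, the path through $c$ enters $c$ from exactly one adjacent cell, so exactly one neighbor of $c$ also carries number $x$ and all other neighbors carry different numbers. When the matrix $D(a,k+2)$ is built with $E_{k+2}(x)$ in Row 1 and the neighbors' sequences in the remaining rows, the key observation is that the double-scramble shuffle preserves, for each row, which column holds its $\heartsuit$ relative to Row 1's $\heartsuit$; more precisely, two rows encode the same value if and only if their $\heartsuit$s end up in the same column after shuffling. Hence in Step 3, after revealing Row 1 and locating its $\heartsuit$ at some Column $j$, exactly the rows encoding $x$ have a $\heartsuit$ in Column $j$. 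Because precisely one neighbor matches, Step 4 reveals exactly one additional $\heartsuit$ in Column $j$, so the check succeeds. For a non-terminal cell on a path, the same reasoning applies except that the path passes \emph{through} $c$, so exactly two neighbors share its number, and Step 4 (in its non-terminal form) correctly finds exactly two extra $\heartsuit$s.

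The remaining case is a non-terminal cell $c$ not covered by any path, which by construction carries the filler value $k+1$ (if even) or $k+2$ (if odd). Here I would use the observation already recorded in the excerpt: with this parity-based filling, such a cell has \emph{no} adjacent cell sharing its number. But the non-terminal matrix also contains the two extra rows that $P$ inserts, both encoding exactly the same filler value as $c$. Thus the total count of rows (besides Row 1) sharing $c$'s value is $0$ from the neighbors plus $2$ from the inserted rows, giving exactly two $\heartsuit$s in Column $j$, so Step 4 again succeeds. I would note that this is precisely why the two padding rows are added and why the parity rule is chosen, tying the filler design directly to the completeness argument.

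Finally I would confirm that Step 5, the rearrangement protocol, faithfully restores every cell's cards to their original faces and positions, so that the cards remain available and correct for all subsequent cells; this follows from the stated correctness of the rearrangement protocol and uses no secret information. Since every cell passes, $V$ accepts with probability one. I do not expect a serious obstacle here: completeness arguments are typically routine once the shuffle's structure-preserving property is isolated. The one point requiring care is the claim that the double-scramble shuffle preserves column-equality of $\heartsuit$ positions across rows while hiding the actual value and the neighbors' order; I would state this cleanly (it is the same invariant underlying the ``preserves the number of rows that encode the same value as Row 1'' observation in the preliminaries) and invoke it uniformly in all cases rather than re-deriving it each time.
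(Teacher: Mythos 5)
Your overall structure matches the paper's proof: a case split into terminal cells, non-terminal cells on a path, and non-terminal filler cells, with the shuffle's preservation of the ``same value as Row 1'' count invoked uniformly. The filler-cell case is handled correctly, and your observation that the two padding rows are exactly what makes the count come out to two is the right one.

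However, there is a genuine gap in the first two cases. You write that, for a terminal cell $c$ with number $x$, ``the path through $c$ enters $c$ from exactly one adjacent cell, so exactly one neighbor of $c$ also carries number $x$.'' That inference does not follow: every cell of the path carries the number $x$, and a cell $c_i$ with $i>2$ that is far from $c$ along the path sequence can still be \emph{physically} adjacent to $c$ in the grid, in which case $c$ would have two or more neighbors carrying $x$ and Step 4 would reject. The same issue arises for a non-terminal cell on a path, where a distant cell of the same path could be a third adjacent cell with number $x$. The paper closes this gap in two steps that your argument omits: first, the protocol preprocesses the solution by repeatedly shortcutting any non-simple path $(c_1,\dots,c_i,c_j,\dots,c_t)$ until every path is \emph{simple} (no two non-consecutive cells of a path are adjacent); second, the completeness proof explicitly invokes simplicity to rule out any index $j$ other than the path-neighbors of $c$ being adjacent to $c$, and also notes that the set of cells carrying $x$ is \emph{exactly} the cells of the path (so no cell off the path can contribute a spurious match either). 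You should state and use the simplicity property explicitly; without it the claimed neighbor counts, and hence completeness, are false for some valid but non-simple solutions.
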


\begin{proof}
Suppose that $P$ knows a solution that contains only simple paths, and fills numbers on the grid according to that solution.

\begin{itemize}
	\item Consider each terminal cell $c$ with a number $x \leq k$. There must be a path $(c_1,c_2,...,c_t)$ starting at $c_1=c$ and ending at $c_t$, the other terminal cell with the number $x$. Since each cell in the grid either belongs to some path or has a number $k+1$ or $k+2$ on it, the set of all cells having the number $x$ is exactly $\{c_1,c_2,...,c_t\}$. We know that $c_2$ is adjacent to $c$ and has the number $x$. Moreover, since the path is simple, there cannot be an index $i>2$ such that $c_i$ is adjacent to $c$. Therefore, $c$ has exactly one adjacent cell with the same number. Since the double-scramble shuffle preserves the number of rows that encode a value equal to that of Row 1, the verification phase for $c$ will pass.
	
	\item Consider each non-terminal cell $c$ with a number $x \leq k$. There must be a path $(c_1,c_2,...,c_t)$ joining two terminal cells with the number $x$. As previously shown, the set of all cells having the number $x$ is exactly $\{c_1,c_2,...,c_t\}$, so we have $c=c_i$ for some index $i$ where $1<i<t$. We know that $c_{i-1}$ and $c_{i+1}$ are adjacent to $c$ and have the number $x$. Moreover, since the path is simple, there cannot be an index $j$ other than $i-1$ and $i+1$ such that $c_j$ is adjacent to $c$. Therefore, $c$ has exactly two adjacent cells with the same number. Since the double-scramble shuffle preserves the number of rows that encode a value equal to that of Row 1, the verification phase for $c$ will pass.
	
	\item Consider each non-terminal cell $c$ with a number $x = k+1$ or $k+2$. Recall that by the way we put numbers on the cells, $c$ has no adjacent cell with the same number. However, in the verification phase of $c$, we put two additional rows, both encoding the number $x$, to the matrix. Therefore, there will be exactly two rows that encode a value equal to that of Row 1, hence the verification phase for $c$ will pass.
\end{itemize}

Since the verification phase for every cell passes, $V$ will always accept.
\end{proof}

\begin{lemma}[Perfect Soundness] \label{lem2}
If $P$ does not know a solution of the Numberlink puzzle, then $V$ always rejects.
\end{lemma}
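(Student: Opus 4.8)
I would prove the contrapositive in the form: if $V$ accepts, then the numbers $P$ committed to on the cells encode a genuine solution, so $P$ does know one. The first thing to pin down is that ``acceptance'' is a deterministic property of $P$'s placement rather than of the shuffles: each check in Step~4 merely counts how many rows (other than Row~1) encode the same value as Row~1, and the excerpt already notes that the double-scramble shuffle preserves exactly this count. Hence whether a given cell passes depends only on the committed values, so ``$V$ always rejects'' is equivalent to ``no placement that fails to encode a solution can pass every check.'' I would then record well-formedness: since every cell is eventually verified with itself as Row~1, and Step~3 turns the whole of Row~1 face up, acceptance forces each cell's sequence to contain exactly one \mybox{$\heartsuit$}, i.e.\ to be a valid $E_{k+2}(\cdot)$; this assigns every cell $c$ a well-defined value $a_c\in\{1,\dots,k+2\}$, and the publicly placed terminal sequences carry their correct given values $\le k$.

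Next I would translate the passed checks into local degree statements about the ``same-value adjacency'' relation. For a terminal cell the matrix has no extra rows, so passing means it has exactly one adjacent cell with the same value. For a non-terminal cell $c$ the matrix has two extra rows encoding $f_c\in\{k+1,k+2\}$; since the column being counted is the one holding $a_c$'s \mybox{$\heartsuit$}, these extra rows contribute to the count precisely when $a_c=f_c$. In particular, whenever $a_c\le k$ we have $a_c\ne f_c$, so the two extra rows are irrelevant and passing forces exactly two adjacent cells with value $a_c$. I would also observe that a filler-valued cell ($a_c\in\{k+1,k+2\}$) lies outside the set $S_v=\{c:a_c=v\}$ for every $v\le k$, and that a neighbor of value $>k$ never matches a value $\le k$; thus filler cells cannot corrupt the counts used to analyze the genuine values, and I may simply discard them when extracting the solution.

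Finally I would extract the paths graph-theoretically. Fix $v\in\{1,\dots,k\}$ and let $G_v$ be the subgraph induced on $S_v$ by grid adjacency. The two terminals of value $v$ lie in $S_v$ with degree one, and every other vertex of $S_v$ is a non-terminal of value $v\le k$, hence of degree two. Because every vertex of $G_v$ has degree one or two, $G_v$ is a vertex-disjoint union of simple paths and cycles; since its only degree-one vertices are the two terminals, exactly one component is a path, and that path $\pi_v$ has the two terminals as its endpoints (any chord or attached cycle would create a degree-three vertex, so $\pi_v$ is automatically simple, and the extra cycle components are harmless). As the value sets $S_1,\dots,S_k$ are pairwise disjoint, the paths $\pi_1,\dots,\pi_k$ are pairwise vertex-disjoint and non-crossing, so they form a valid Numberlink solution, contradicting the assumption that $P$ knows none. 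The main obstacle I anticipate is this last graph-theoretic step: carefully arguing that the degree conditions force the two prescribed terminals to be the endpoints of one and the same path component (ruling out the terminals ending up in separate components or attached to stray cycles), while simultaneously confirming that the parity-based filler rows and the cross-value neighbors never alter the counts on which the whole extraction rests.
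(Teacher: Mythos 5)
Your proof is correct and, modulo presentation, establishes exactly what the paper's proof does; the interesting difference lies in the final extraction step. The paper also argues by contraposition and uses the same local facts (an accepted terminal of value $x$ has exactly one same-valued neighbour; an accepted non-terminal of value $x\le k$ has exactly two), but it then builds the path by an explicit inductive walk: start at a terminal $c_1$, move to its unique same-valued neighbour $c_2$, and at each non-terminal $c_i$ continue to the same-valued neighbour other than $c_{i-1}$, asserting that $c_{i+1}$ cannot coincide with any earlier $c_j$ (that would give $c_j$ too many same-valued neighbours), so the walk must terminate at the other terminal. You instead characterize the induced subgraph $G_v$ globally: every degree is $1$ or $2$, exactly two vertices (the terminals) have degree $1$, hence there is exactly one path component and its endpoints are the two terminals, while stray cycle components are harmless. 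The two arguments are equivalent, but yours makes fully explicit two points the paper treats tersely or not at all: the impossibility of the walk revisiting a cell (one sentence in the paper) and the possible presence of spurious same-valued cycles (never mentioned, though the paper's Fig.~4 essentially illustrates it). You also record two preliminaries the paper leaves implicit --- that acceptance depends only on the committed placement because the double-scramble shuffle preserves the relevant count, and that acceptance forces each committed sequence to be a well-formed $E_{k+2}(\cdot)$, so that ``the value of a cell'' is well defined. The paper's walk is slightly more economical; your degree-counting version is more robust and self-contained.
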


\begin{proof}
We will prove the contrapositive of this statement. Suppose that $V$ accepts, meaning that the verification phase passes for every cell. We will prove that $P$ must know a solution.

Consider any number $x \leq k$. We know that there are two terminal cells with the number $x$. Consider one of them, called $c_1$. We know from the verification phase that $c_1$ has exactly one adjacent cell with the number $x$, called $c_2$. For each $i \geq 2$, if $c_i$ is a terminal cell, then there exists a path $(c_1,c_2,...,c_i)$ connecting the two terminal cells with the number $x$. Otherwise if $c_i$ is a non-terminal cell, then we know from the verification phase that $c_i$ has exactly two adjacent cells with the number $x$, one of them being $c_{i-1}$. We then inductively proceed to consider the other cell, called $c_{i+1}$, in the same manner. Also, $c_{i+1}$ must be different from every $c_j$ with $j \leq i$. Therefore, we must eventually reach the other terminal cell, which implies that there exists a path connecting the two terminal cells with the number $x$.

Since this is true for every number $x \leq k$, there exists a set of disjoint paths joining all pairs of terminal cells with the same number in $P$'s solution, implying that $P$ must know a solution.
\end{proof}

\begin{lemma}[Zero-Knowledge] \label{lem3}
During the verification phase, $V$ learns nothing about $P$'s solution of the Numberlink puzzle.
\end{lemma}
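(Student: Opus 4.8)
The plan is to establish zero-knowledge by constructing an explicit simulator $S$ that, without any knowledge of $P$'s solution, produces a transcript whose distribution is identical to that of the real protocol. Since all randomness in the protocol comes from double-scramble shuffles, and the verifier only ever sees cards that are publicly turned over, it suffices to show that the distribution of every revealed card at each step is independent of the secret numbers filled on the grid. I would analyze the verification phase cell-by-cell, treating the revealed cards in Steps 3 and 4 together with the marking cards revealed during the rearrangement protocol.

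The key steps are as follows. First I would observe that in Step 3, the encoding cards of Row 1 are turned over: by the definition of $E_{k+2}(x)$, exactly one of these is \mybox{$\heartsuit$} and the rest are \mybox{$\clubsuit$}. Because the double-scramble shuffle permutes Columns $1,\dots,k+2$ by a uniformly random permutation $q$ hidden from all parties, the position $j$ of the \mybox{$\heartsuit$} is uniformly distributed over $\{1,2,\dots,k+2\}$, independently of the actual value $x$ encoded in $c$. Hence $S$ can simply pick $j$ uniformly at random. Second, in Step 4 the other encoding cards in Column $j$ are revealed; the number of \mybox{$\heartsuit$}s among them equals the number of adjacent (or padding) rows encoding the same value as Row 1, which is fixed at exactly two whenever $V$ does not reject, while the remaining cards are \mybox{$\clubsuit$}. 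The double-scramble shuffle applies a uniformly random row permutation $p$ to Rows $2,\dots,a$, so the positions of these \mybox{$\heartsuit$}s among the revealed column entries are uniformly distributed over all ways to place two hearts among $a-1$ rows, again independently of which specific neighbours carry the value. Thus $S$ can place exactly two \mybox{$\heartsuit$}s in uniformly random rows.

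Third, I would handle the rearrangement protocol: its Steps 2 and 4 reveal the marking cards in Column 0 and Row 0, but as already noted in the excerpt, the double-scramble shuffle applied first makes the revealed permutations $p$ and $q$ uniformly distributed over all permutations, so these too carry no information and $S$ can generate them as uniform random permutations. I would then conclude that $S$ produces, for each verified cell, a transcript consisting of a uniformly random heart-position in Row 1, two uniformly random heart-positions in the revealed column, and uniformly random revealed permutations, matching the real distribution exactly; since the cells are verified independently and the cards are restored between verifications, the joint transcript distributions coincide. Finally, because $S$ runs in polynomial time and uses no knowledge of $w$, the zero-knowledge property follows.

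The main obstacle I anticipate is verifying that the real and simulated distributions genuinely coincide rather than merely ``look random,'' particularly the independence claims: I must argue carefully that the column permutation $q$ and the row permutation $p$ act independently, that the heart-position in Row 1 and the heart-positions in the chosen column are jointly uniform and unentangled with the secret values, and that conditioning on the event ``$V$ does not reject'' (exactly two matching rows) does not introduce any hidden correlation with $x$. Establishing this cleanly requires tracking exactly which card faces $V$ observes and confirming that no observed quantity is a function of the underlying secret beyond the already-public ``pass/reject'' outcome.
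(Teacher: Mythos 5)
Your proposal is correct and follows essentially the same approach as the paper: argue that the column shuffle makes the Row 1 heart position uniform over the $k+2$ columns and the row shuffle makes the heart positions in the revealed column uniform over the $a-1$ rows, so a simulator can reproduce both reveals without the solution (your additional treatment of the marking cards revealed in the rearrangement protocol matches the remark the paper makes when introducing that subprotocol). The only slight imprecision is that you fix the number of revealed hearts at exactly two, which holds for non-terminal cells; for terminal cells it is exactly one, as the paper's phrasing ``all (one or two) \mybox{$\heartsuit$}s'' accounts for.
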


\begin{proof}
To prove the zero-knowledge property, it is sufficient to prove that all distributions of the values that appear when $P$ turns over cards can be simulated by a simulator $S$ without knowing $P$'s solution.

Consider the verification phase of a cell $c$ with a matrix $D(a,k+2)$ of cards ($a \in \{3,4,5\}$ for a terminal cell and $a \in \{5,6,7\}$ for a non-terminal cell). There are two steps in the verification phase where $P$ turns over cards.

In the step where $P$ turns over all encoding cards in Row 1, the order of Columns $1,2,...,k+2$ is uniformly distributed among all possible permutations due to the double-scramble shuffle, hence the \mybox{$\heartsuit$} has an equal probability to appear at each of the $k+2$ positions. Therefore, this step can be simulated by $S$ without knowing $P$'s solution.

After that, $P$ locates the position of the \mybox{$\heartsuit$} in Row 1 to be at Column $j$, and then turns over all other encoding cards in Column $j$. The order of Rows $2,3,...,a$ is uniformly distributed among all possible permutations due to the double-scramble shuffle, hence all (one or two) \mybox{$\heartsuit$}s have an equal probability to appear at each of the $a-1$ positions. Therefore, this step can be simulated by $S$ without knowing $P$'s solution.

Therefore, we can conclude that $V$ learns nothing about $P$'s solution during the verification phase.
\end{proof}

\section{Applications}
\subsection{Undirected $k$ Vertex-Disjoint Paths Problem} \label{ukdpp}
Consider the following problem: given an undirected graph $G$, and $k$ pairs of vertices $(s_1,t_1), ..., (s_k,t_k)$ called \textit{terminal vertices}, find a set of vertex-disjoint paths joining every pair of $s_i$ and $t_i$, or report that none exists. This problem is called the undirected $k$ vertex-disjoint paths problem (\textsc{u}$k$-\textsc{dpp}) and is one of the most well-studied problems in algorithmic graph theory. \textsc{u}$k$-\textsc{dpp} is known to be solvable in polynomial time for a fixed constant $k$ \cite{robertson}, but becomes NP-complete when $k$ is a part of the input \cite{karp}.

The Numberlink puzzle can be considered as a special case of \textsc{u}$k$-\textsc{dpp} in a grid graph. In this section, we will modify the protocol for Numberlink in Section \ref{general} to make it support \textsc{u}$k$-\textsc{dpp} in a general graph as well.

In \textsc{u}$k$-\textsc{dpp}, a path $(v_1,v_2,...,v_t)$ is called \textit{simple} if there is no $i,j$ such that $j>i+1$ and $v_j$ is a neighbor of $v_i$. Similarly to the protocol for Numberlink, if our solution contains a non-simple path $(v_1,v_2,...,v_t)$ with $v_j$ being a neighbor of $v_i$ where $j>i+1$, then we replace it with a shorter path $(v_1,v_2,...,v_i,v_j,v_{j+1}...,v_t)$. We repeatedly perform this until every path in the solution becomes simple.

Suppose that the maximum degree of a vertex in $G$ is $d$. We can inductively color the vertices of $G$ with at most $d+1$ colors in linear time such that there are no neighboring vertices with the same color. This $(d+1)$-coloring is known to all parties. Similarly to the protocol for Numberlink, for each path connecting $s_x$ and $t_x$, we put a number $x$ on every vertex on that path. For each vertex $v$ not covered by any path, we put a number $k+i$ on $v$ if it has the $i$-th color in the $(d+1)$-coloring of $G$. By filling the numbers this way, each vertex not covered by any path will have no neighbor with the same number.

Let $T$ be the set of all terminal vertices. The intuition of this protocol is that the prover $P$ will try to convince the verifier $V$ that
\begin{enumerate}
	\item every vertex in $T$ has exactly one neighbor with the same number, and
	\item every vertex not in $T$ either has a number greater than $k$, or has exactly two neighbors with the same number.
\end{enumerate}

Since the maximum number on the vertices is at most $k+d+1$, we use $E_{k+d+1}(x)$ to encode a number $x$. For each vertex $v \in T$ with a number $x$, $P$ publicly puts a sequence of face-down cards $E_{k+d+1}(x)$ on $v$. Then, for each vertex $v \notin T$ with a number $x$, $P$ secretly puts a sequence of face-down cards $E_{k+d+1}(x)$ on $v$.

The verification phase works in the same manner as the protocol for Numberlink. For a vertex $v \in T$, $P$ puts the sequence on $v$ into the first row, and the sequence on each of $v$'s neighbors into each of the next (at most) $d$ rows of the matrix. $V$ then verifies that there is exactly one \mybox{$\heartsuit$} in the same column as the \mybox{$\heartsuit$} in Row 1. For a vertex $v \notin T$ with the $i$-th color in the $(d+1)$-coloring of $G$, $P$ does the same but also puts two additional rows, both encoding $k+i$, to the matrix. Then, $V$ verifies that there are exactly two \mybox{$\heartsuit$}s in the same column as the \mybox{$\heartsuit$} in Row 1.

The proofs of Lemmas \ref{lem1}, \ref{lem2}, and \ref{lem3} can be applied straightforwardly to show the correctness and security of this protocol. In total, this protocol uses $(k+d+1)(|V_G|+2)$ encoding cards and $k+2d+3$ marking cards, where $V_G$ is the set of vertices of $G$. Therefore, the total number of required cards is $\Theta((k+d)|V_G|)$.

\subsection{Directed $k$ Vertex-Disjoint Paths Problem} \label{dkdpp}
The directed $k$ vertex-disjoint paths problem (\textsc{d}$k$-\textsc{dpp}) is a counterpart of \textsc{u}$k$-\textsc{dpp} in a directed graph, with an objective to find a set of vertex-disjoint directed paths from every $s_i$ to $t_i$. \textsc{d}$k$-\textsc{dpp} is significantly harder than \textsc{u}$k$-\textsc{dpp}, as the problem is NP-complete even for any constant $k \geq 2$ \cite{fortune}. Our protocol for \textsc{u}$k$-\textsc{dpp} in Section \ref{dkdpp} can be slightly modified as follows to support \textsc{d}$k$-\textsc{dpp}.

First, we make all paths in the solution become simple and fill numbers on vertices of $G$ in exactly the same way as the protocol for \textsc{u}$k$-\textsc{dpp} ($d$ is still defined to be the maximum degree (sum of indegree and outdegree) of a vertex in $G$).

Let $T_s = \{s_1,s_2,...,s_k\}$ be the set of \textit{source vertices} and $T_t = \{t_1,t_2,...,t_k\}$ be the set of \textit{sink vertices}. The intuition of this protocol is that the prover $P$ will try to convince the verifier $V$ that
\begin{enumerate}
	\item every vertex in $T_s$ has no incoming neighbor with the same number and exactly one outgoing neighbor with the same number, and
	\item every vertex in $T_t$ has exactly one incoming neighbor with the same number and no outgoing neighbor with the same number, and
	\item every vertex not in $T_s \cup T_t$ either has a number greater than $k$, or has exactly one incoming neighbor with the same number and exactly one outgoing neighbor with the same number.
\end{enumerate}

Like in the protocol for \textsc{u}$k$-\textsc{dpp}, for each vertex $v \in T$ with a number $x$, $P$ publicly puts a sequence of face-down cards $E_{k+d+1}(x)$ on $v$. Then, for each vertex $v \notin T$ with a number $x$, $P$ secretly puts a sequence of face-down cards $E_{k+d+1}(x)$ on $v$.

The verification phase works in the same manner as the protocol for \textsc{u}$k$-\textsc{dpp}, but each vertex has to be verified in two separate rounds, one for incoming neighbors and one for outgoing neighbors. In the incoming (resp. outgoing) round of a vertex $v \in T_s$, $P$ puts the sequence on $v$ into the first row, and the sequence on each of $v$'s incoming (resp. outgoing) neighbors into each of the next at most $d$ rows of the matrix. Then, $V$ verifies that there are exactly zero (resp. one) \mbox{\mybox{$\heartsuit$}s} in the same column as the \mybox{$\heartsuit$} in Row 1. The other way around goes for each vertex $v \in T_t$. In each round of a vertex $v \notin T_s \cup T_t$ with the $i$-th color in the $(d+1)$-coloring of $G$, $P$ does the same but also puts an additional row encoding $k+i$ to the matrix. Then, $V$ verifies that there is exactly one \mybox{$\heartsuit$} in the same column as the \mybox{$\heartsuit$} in Row 1.

The proofs of Lemmas \ref{lem1}, \ref{lem2}, and \ref{lem3} can be applied straightforwardly to show the correctness and security of this protocol. In total, this protocol uses $(k+d+1)(|V_G|+1)$ encoding cards and $k+2d+2$ marking cards. Therefore, the total number of required cards is $\Theta((k+d)|V_G|)$, the same as in \textsc{u}$k$-\textsc{dpp}.

\section{Future Work}
We developed a physical protocol of zero-knowledge proof for Numberlink puzzle using $\Theta(kmn)$ cards, and for \textsc{u}$k$-\textsc{dpp} and \textsc{d}$k$-\textsc{dpp} using $\Theta((k+d)|V_G|)$ cards. A challenging future work is to develop a protocol of zero-knowledge proof for Numberlink puzzle that requires asymptotically fewer number of cards, or the one that can convince the verifier that the prover's solution contains paths that cover all cells (which is apparently a requirement in a variant of rule used in some newly developed mobile apps).

Other possible future work includes developing protocols of zero-knowledge proof for other popular logic puzzles or well-known problems in algorithmic graph theory, as well as exploring methods to physically verify other interesting functions.

\end{document}